\newtheorem{theorem}{Theorem}
\newtheorem{lemma}[theorem]{Lemma}
\newcommand{\expec}[1]{\mathbb E\left [ #1 \right ]}
\newcommand{\var}[1]{\mathbb V\left [ #1 \right ]}
\newcommand{\covar}[1]{\mathbb{COV}\left [ #1 \right ]}
\newcommand{\prob}[1]{\mathbb P \left [ #1 \right ]}
\DeclareMathOperator{\polylog}{polylog}
\DeclareMathOperator{\poly}{poly}
\DeclareMathOperator{\oracle}{{oracle}}
\newcommand{\eat}[1]{}
\newcommand{\bin}{{\mathbf{Bin}}}
\renewcommand{\H}{{\mathcal H}}
\newcommand{\wt}{\widetilde}
\newcommand{\wh}{\widehat}
\newcommand{\light}{\texttt{L}}
\newcommand{\heavy}{\texttt{H}}
\begin{document}
\title{Improved 3-pass Algorithm for Counting 4-cycles\\in Arbitrary Order Streaming\footnote{This work is supported by NSF Award \#1907738.}}
\author{Sofya Vorotnikova}
\affil{Dartmouth College\\
\texttt{svorotni@gmail.com}}
\date{}
\maketitle

\abstract{The problem of counting small subgraphs, and specifically cycles, in the streaming model received a lot of attention over the past few years. In this paper, we consider arbitrary order insertion-only streams, improving over the state-of-the-art result on counting 4-cycles. Our algorithm computes a $(1+\epsilon)$-approximation by taking three passes over the stream and using space $O(\frac{m \log n}{\epsilon^2 T^{1/3}})$, where $m$ is the number of edges in the graph and $T$ is the number of 4-cycles.}

\section{Introduction}
Subgraph counting is a fundamental graph problem and an important primitive in massive graph analysis. It has many applications in data mining and analyzing the structure of large networks. This problem has also received a lot of attention in the streaming community, with the main focus on counting triangles~\cite{AhnGM2012a, Bar-YKS2002, BOV13, BurioFLMS2006, JowhaG2005, MVV16, CormodeJ14, KutzkovP14a, PavanTTW13, BC17, KallaugherMPV19, MV20}.
%(links, p19, 20), since it arises in many real-life problems, such as in spam detection [15], community detection in social networks [18], identification of web pages with a common topic [39], and evaluation of large graph models [87] (phd for refs). 
Several papers considered counting larger cycles and cliques~\cite{BC17,KallaugherMPV19, ManjunathMPS11},
%(links, p19)
and a few studied arbitrary subgraphs of constant size~\cite{BC17,KP17,Kane12}. 
%p19
There is also work on counting 4-cycles in the case when the underlying graph is bipartite~\cite{butt}. Since a 4-cycle is also a 2-by-2 biclique, it is the most basic motif in bipartite graphs and plays essentially the same role as a triangle does in general graphs.

In this paper, we concentrate on counting 4-cycles in the arbitrary order insertion-only streaming model, improving over the state-of-the-art algorithm presented by McGregor and Vorotnikova~\cite{MV20}.

\subsection{Our Result and Previous Work}
Throughout this paper, we use $n$ to denote the number of vertices in the graph, $m$ to denote the number of edges, and $T$ for the number of 4-cycles. Note that our algorithm is parameterized in terms of $T$, which is a convention adopted in the literature. In practice, the quantities in the algorithm would be initialized based on a promised lower bound on $T$.  

Our result is as follows.
\begin{theorem}
	There exists an $O(\frac{m \log n}{\epsilon^2 T^{1/3}})$ space algorithm that takes three passes over an arbitrary order stream and returns a $(1+\epsilon)$ multiplicative approximation to the number of 4-cycles in the graph with probability at least $1/4$. 
\end{theorem}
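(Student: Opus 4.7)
The plan is to design a three-pass edge-sampling estimator combined with a heavy/light decomposition of vertices at degree threshold $\Delta = \Theta(T^{1/3})$. Since the sum of degrees is $2m$, the number of heavy vertices (those of degree exceeding $\Delta$) is $O(m/T^{1/3})$, matching the target space bound and making it affordable to retain extra structural information around them.

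In Pass~1, I would sample each edge independently with probability $p = \Theta(1/(\epsilon^2 T^{1/3}))$ to obtain a sketch $S$ of expected size $O(m/(\epsilon^2 T^{1/3}))$, while simultaneously tracking degree estimates so the heavy/light partition becomes available after Pass~1. In Pass~2, for each sampled edge $uv \in S$, candidate wedges are built on both sides: when an endpoint is light its incident edges can be enumerated and stored since they are few in total, while for a heavy endpoint the neighborhood is recorded through a dedicated heavy-vertex table. Pass~3 checks, for each resulting candidate length-3 path whose middle edge lies in $S$, whether the final edge completes it into a 4-cycle; the algorithm outputs the scaled count $\hat T = \sum_C X_C / p_C$, where $X_C$ indicates that $C$ was detected and $p_C$ is the prescribed detection probability computed from the sample and the neighborhood data.

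By construction $\expec{\hat T} = T$, so the main analytic task is a variance bound $\var{\hat T} = O(\epsilon^2 T^2)$, which via Chebyshev yields the claimed $(1+\epsilon)$-approximation with probability at least $1/4$. The variance decomposes over pairs of 4-cycles $(C, C')$ according to the structure of $C \cap C'$: disjoint pairs contribute nothing, while pairs sharing a vertex or an edge must be bounded through the heavy/light split, using the facts that any light vertex lies on at most $\Delta^3 = T$ 4-cycles and that variance routed through heavy endpoints effectively vanishes since their neighborhoods are stored exactly rather than sampled. The principal obstacle is precisely this case analysis for pairs sharing a single edge or a single vertex near the heavy/light boundary, where the exponent $1/3$ is forced: decreasing $\Delta$ blows up the heavy-vertex storage beyond the budget, and increasing $\Delta$ weakens the light-vertex variance bound, so the threshold $T^{1/3}$ is tight and verifying that every case contributes at most $O(\epsilon^2 T^2)$ under this choice is the delicate step of the proof.
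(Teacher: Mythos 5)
Your high-level template (sample edges at rate $\Theta(1/(\epsilon^2 T^{1/3}))$, split objects into heavy and light at a $\Theta(T^{1/3})$ threshold, conclude via a variance bound and Chebyshev) matches the paper's, but the decomposition you chose --- heavy/light \emph{vertices} by degree --- fails on both the space bound and the variance bound, and the two steps you defer are exactly where it breaks. On space: having only $O(m/T^{1/3})$ heavy vertices does not make it affordable to ``record their neighborhoods in a dedicated table,'' because the set of edges incident to heavy vertices can be all $m$ edges (e.g.\ when every edge has a heavy endpoint), so that table costs $\Theta(m)$. Likewise, storing all edges incident to the light endpoints of the $\Theta(mp)$ sampled edges costs $\Theta(mp\cdot T^{1/3})=\Theta(m/\epsilon^2)$, again over budget. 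The paper never stores designated vertices' full neighborhoods; it keeps the incident edges of two independent \emph{random} vertex samples of rate $p$ and uses them only to estimate codegrees and per-edge cycle counts.

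On the variance: under independent edge sampling, pairs of 4-cycles sharing only a vertex contribute zero covariance, so your ``light vertex lies on at most $\Delta^3=T$ cycles'' bound (which is vacuous anyway, since every vertex lies on at most $T$ cycles) addresses the wrong quantity. The terms that matter are pairs of 3-paths sharing one or two edges, and their number is governed by $t(e)$ and by the codegree $|\Gamma(u)\cap\Gamma(v)|$ of a wedge's endpoints, not by vertex degrees. The hard instance is a large diamond: vertices $u,v$ with $k=\omega(T^{1/3})$ common neighbors of low degree contribute $\binom{k}{2}$ cycles all sharing wedges, and no affordable neighborhood storage plus rate-$p$ edge sampling detects these with controlled variance. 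The paper needs three ingredients for which you have no substitute: (i) a direct estimator of $\binom{k}{2}$ for each heavy diamond from a vertex sample (Lemma~\ref{heavy_wedges}); (ii) an oracle classifying edges by $t(e)$ at threshold $T^{2/3}$ so that surviving 3-paths pairwise intersect in at most $O(T^{2/3})$ ways at one edge and $O(T^{1/3})$ at two (Lemmas~\ref{oracle_lemma} and~\ref{light_lemma}); and (iii) the combinatorial fact (Lemma~\ref{old_lemma}, from MV20) that all but an $\epsilon$ fraction of cycles have at most one heavy edge, without which the light-side estimator is biased. Your proposal would need all three to become a proof.
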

By running $\Theta(\log 1/\delta)$ copies of the algorithm in parallel and taking the median of their outputs, we can increase the success probability to $1 - \delta$, where $\delta \in (0, 1)$.

Our algorithm can be directly compared to the $\wt{O}(m/T^{1/4})$ space\footnote{We use $\wt{O}(\cdot)$ notation to hide $\polylog(n)$ and $1/\epsilon$ factors.} algorithm by McGregor and Vorotnikova~\cite{MV20}. It takes the same number of passes over the stream and has the same approximation guarantees. We believe that the space of our algorithm is tight, however the best known lower bound is currently $\Omega(m/T^{1/2})$~\cite{MV20}.

In~\cite{BC17} Bera and Chakrabarti present a different 4-cycles counting algorithm which takes four passes and uses space $\wt{O}(m^2/T)$. Note, that the space used by our algorithm is as good or better when $T = O(m^{3/2})$. McGregor and Vorotnikova~\cite{MV20} also present a 2-pass $\wt{O}(m^{3/2}/T^{3/4})$ space algorithm which distinguishes between graphs with 0 and $T$ 4-cycles.

\section{Algorithm and Analysis}
\subsection{Notation}
A \textit{wedge} is a path of length 2. For wedge $(u, b, v)$ we call vertices $u$ and $v$ the \textit{endpoints} of the wedge and vertex $b$ the \textit{center}.

We use $\Gamma(v)$ to denote the set of neighbors of vertex $v$. Consider sets of vertices $\{u,v\}$ and $\Gamma(u) \cap \Gamma(v)$. Edges between these two sets form a complete bipartite graph, which we call a \textit{diamond} with endpoints $u$ and $v$. We say that wedge $w$ is a part of diamond $d$ if they have the same endpoints. Note that a diamond with endpoints $u$ and $v$ consists of $|\Gamma(u) \cap \Gamma(v)|$ wedges and involves ${|\Gamma(u) \cap \Gamma(v)| \choose 2}$ 4-cycles.

Throughout the paper we use $t(e)$, $t(w)$, and $t(d)$ to denote the number of 4-cycles involving edge $e$, wedge $w$, or involved in diamond $d$ respectively. For any quantity $k$, we use $\wh{k}$ to denote its estimate.

In Section~\ref{alg}, we define heavy/light edges, wedges, and diamonds, where ``heavy'' roughly corresponds to ``involved in many 4-cycles'' and ``light'' to ``involved in few 4-cycles''. Note that these are defined by the algorithm and depend on the collected samples of vertices and edges. We define $T_H$ to be the number of 4-cycles with at least one heavy wedge and $T_L$ as the number of 4-cycles with no heavy wedges and at most one heavy edge.

\subsection{Main Idea}
The most basic algorithm approximating the number of 4-cycles in a graph is as follows:
\begin{description}\parskip=0ex
\item[Pass 1:] Sample edges with probability $p$, call set $S$.
\item[Pass 2:] For each edge $e$ in the stream, let $s(e)$ be the number of 3-paths with all edges in $S$ that $e$ completes to a 4-cycle.
\item[Return:]  $\frac{1}{4p^3}\sum_{e \in E} s(e)$.
\end{description}
In expectation, the value returned by this algorithm is $T$. However, due to the fact that some edges or wedges in the graph can be involved in a large number of 4-cycles, the variance of this estimator is large. If an edge or wedge participates in many 4-cycles, call it ``bad''. In this paper, we show that it is possible to identify such bad edges and wedges and take care of them separately, leading to an accurate approximation.

We observe that if wedge $(u, b, v)$ is bad, then it is a part of a large diamond with endpoints $u$ and $v$. If we sample $\wt{\Omega}(1)$ vertices in $\Gamma(u) \cap \Gamma(v)$ and collect all incident edges, we will detect the diamond and accurately estimate its size. Using this method, we approximate the total number of cycles with bad wedges.

We then separately approximate the number of cycles with no bad wedges and at most one bad edge. This procedure follows the same template as the arbitrary order 4-cycle counting algorithm in~\cite{MV20}. Sampling edges uniformly at a certain rate allows us to obtain some 3-paths which are involved in 4-cycles with no bad wedges. Additionally, sampling vertices uniformly and storing all incident edges allows us to build an oracle roughly classifying edges as good or bad. We use this oracle to compute the number of bad edges in each of the cycles we discover. Note that the oracle takes an extra pass over the stream, and thus in total our algorithm uses three passes.

\subsection{Algorithm} \label{alg}
The algorithm in this section computes estimates to $T_H$ and $T_L$ separately and then returns their sum. We later show that $\wh{T}_H + \wh{T}_L$ is an accurate approximation of $T$. 

Within the algorithm, we define heavy/light diamonds and wedges. Roughly speaking, a heavy diamond consist of $\Omega(T^{1/3})$ wedges and a light diamond consist of $O(T^{1/3})$ wedges. A wedge is then defined as heavy or light if it is a part of a heavy or light diamond respectively.

In the third pass, we refer to the oracle which classifies edges as heavy or light. It is described separately after the main algorithm.
\begin{description}\parskip=0ex
\item[Pass 1:] \strut
\begin{itemize}
\item Let $p = \frac{c \log n}{ \epsilon^2 T^{1/3}}$.
\item Sample edges with probability $p$, call set $S_E$.
\item Sample vertices with probability $p$, call set $Q_V$. Collect all incident edges, call set~$Q_E$.
\item Sample vertices with probability $p$, call set $Z_V$. Collect all incident edges, call set $Z_E$.
\end{itemize}
\item[After Pass 1:] \strut
\begin{itemize}
\item For a pair of vertices $(u, v)$, let $q(u,v)$ be the number of wedges with center in $Q_V$ and endpoints $u$ and $v$. \item Define diamond $d$ with endpoints $u$ and $v$ to be \textit{heavy} if $q(u,v) \geq p T^{1/3}$ and \textit{light} otherwise. Let $\wh{t}(d) = {q(u,v)/p \choose 2}$.
\item Define wedge $w$ with endpoints $u$ and $v$ to be \textit{heavy} if it is part of a heavy diamond and \textit{light} otherwise. Let $\wh{t}(w) = q(u,v)/p - 1$.
\item Find all pairs of vertices $(u,v)$ which are endpoints of heavy diamonds/wedges.
\item Let $\wh{T}_H = \sum \wh{t}(d)$, where $d$ is a heavy diamond.
\end{itemize}
\item[Pass 2:] For every edge $e$ in the stream:
\begin{itemize}
\item Check if $e$ completes any 3 edges from $S_E$ to a 4-cycle (call it $\tau$). Check whether $\tau$ has a heavy wedge; if not, store $(e, \tau)$.
\end{itemize}
\item[Pass 3:] \strut
\begin{itemize}
\item For all edges involved in cycles stored in pass 2, use $\oracle(Z_V, Z_E)$ to classify them as heavy or light.
\item Let $A_0$ be the number of $(e, \tau)$ pairs s.t. $\tau$ has no heavy edges.
\item Let $A_1$ be the number of $(e, \tau)$ pairs s.t. $e$ is heavy and the other 3 edges in $\tau$ are light.
\item Let $\wh{T}_L = A_0/(4p^3) + A_1/p^3$
\end{itemize}
\item[Return:] $\wh{T}_H + \wh{T}_L$
\end{description}

\paragraph{Oracle.}
Below, we describe the oracle which classifies edges as heavy or light. Roughly speaking, heavy edges are involved in $\Omega(T^{2/3})$ 4-cycles and light edges in $O(T^{2/3})$.

Suppose, that we need to classify edge $e = (u,v)$ as heavy or light.  We then look at edges sharing a vertex with $e$. In the post-processing of the first pass, we determined all pairs of vertices which are endpoints of heavy diamonds/wedges. Thus, for wedge $(e, e')$ we can refer to that list to check whether it is heavy or not. If it is heavy, we also get an estimate of the number of 4-cycles it is involved in and thus contributes to $t(e)$. Separately, we approximate the total number of 4-cycles on $e$ which involve two light wedges $(e, e')$ and $(e, e'')$.

\begin{description}\parskip=0ex
\item[oracle($Z_V$, $Z_E$, $e$):] \strut
\begin{itemize}
\item Let $\wh{t}_H(e) \leftarrow 0$ and $\wh{t}_L(e) \leftarrow 0$.
\item For wedges of the form $(e, e')$, where $e' \in Z$: if $(e, e')$ is heavy, ``exclude''\footnote{When we talk about ``excluding'' edges from $Z$, we need to ``exclude'' different sets of edges for different instances of the oracle. In practice, for each instance mark those edges and ignore them. However, they might be used by other instances.} $e'$ from $Z$.
\item For each edge $e^*$ in the stream, s.t. $e^*$ shares a vertex with $e$: 
\begin{itemize}
\item Look up whether $(e, e^*)$ is heavy.
\item If heavy,  $\wh{t}_H(e) \leftarrow \wh{t}_H(e) + \wh{t}(e, e^*)$.
\item If light and $e^* = (v, a)$, let $\lambda(e, e^*)$ be the number of vertices $b \in Z_V$, such that $(u, v, a, b)$ is a 4-cycle. $\wh{t}_L(e) \leftarrow \wh{t}_L(e) + \lambda(e, e^*)/p$.
\end{itemize}
\item Let $\wh{t}(e) = \wh{t}_H(e) + \wh{t}_L(e)$.
\item Return:
$
\begin{cases}
\light & \mbox{ if } \wh{t}(e) < T^{2/3} \mbox{ (light edge)}\\
\heavy & \mbox{ if } \wh{t}(e) \geq T^{2/3} \mbox{ (heavy edge)}
\end{cases}
$
\end{itemize}
\end{description}

\subsection{Correctness}
\subsubsection{Oracle}
In Lemma~\ref{oracle_lemma}, we show that light edges are involved in at most $4 T^{2/3}$ 4-cycles and heavy edges are involved in at least $T^{2/3}/4$ cycles. Note that the oracle relies on the procedure estimating the number of 4-cycles on a heavy wedge, so in the proof we refer to Lemma~\ref{heavy_wedges} below.
\begin{lemma}\label{oracle_lemma}
With high probability
\begin{description}
\item[a.] $\oracle(Z_V, Z_E, e) = \light$ implies $t(e) \leq 4 T^{2/3}$
\item[b.] $\oracle(Z_V, Z_E, e) = \heavy$ implies $t(e) \geq T^{2/3}/4$
\end{description}
\end{lemma}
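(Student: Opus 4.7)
The plan is to show that $\wh{t}(e)$ concentrates around its expectation, which lies between $t(e)$ and $2t(e)$, with additive error $O(T^{2/3})$; the two claims then fall out of a simple threshold comparison. Every 4-cycle through $e$ contains exactly two wedges that include $e$ (one centered at each endpoint of $e$), so I can classify these cycles as HH, HL, or LL according to whether both, one, or neither of these wedges is heavy, writing $t(e) = t_{HH}(e) + t_{HL}(e) + t_{LL}(e)$. By Lemma~\ref{heavy_wedges}, $\wh{t}_H(e) = \sum_{\text{heavy } w \ni e} \wh{t}(w)$ is within a $(1 \pm O(\epsilon))$ factor of $2 t_{HH}(e) + t_{HL}(e)$, since each HH-cycle is seen through both of its heavy wedges and each HL-cycle through its single heavy wedge. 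For $\wh{t}_L(e)$, the exclusion step is crucial: when processing a light wedge $(u, v, a)$, the oracle counts $b \in Z_V$ with $(u,v,a,b)$ a 4-cycle, but only if the edge $(u,b)$ was not excluded from $Z_E$, which happens exactly when the sibling wedge $(v, u, b)$ is also light. A symmetric contribution comes from iterating on the $u$-side, so each LL-cycle is picked up in expectation from both of its wedges while HL- and HH-cycles are not counted at all in $\wh{t}_L$. Thus $\expec{\wh{t}_L(e)} = 2 t_{LL}(e)$ and $\expec{\wh{t}(e)} = 2 t_{LL}(e) + 2 t_{HH}(e) + t_{HL}(e) \in [t(e), 2 t(e)]$ up to the $(1 \pm O(\epsilon))$ slack from Lemma~\ref{heavy_wedges}.

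Next I would establish concentration. The $\wh{t}_H(e)$ part is concentrated multiplicatively via Lemma~\ref{heavy_wedges} applied to each heavy wedge through $e$. For $\wh{t}_L(e)$ I would regroup as $\wh{t}_L(e) = \sum_{z} (X_z / p) \mathbf{1}[z \in Z_V]$, where $X_z$ aggregates the surviving light-wedge contributions for which $z$ is the random ``fourth vertex.'' The indicators $\mathbf{1}[z \in Z_V]$ are independent, so $\mathrm{Var}(\wh{t}_L(e)) \leq \sum_z X_z^2 / p$. The key structural observation is that $X_z$ is uniformly bounded: whenever $X_z > 0$, the relevant sibling wedge at $z$ is forced to be light by the exclusion, its diamond then has at most $O(T^{1/3})$ wedges, and hence $X_z = O(T^{1/3})$. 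Combined with $\sum_z X_z = \expec{\wh{t}_L(e)} \leq 2 t(e)$, this yields $\sum_z X_z^2 = O(T^{1/3} \cdot t(e))$ and, using $p = \Theta(\log n / T^{1/3})$, variance $O(T^{2/3} \cdot t(e) / \log n)$. Chebyshev then gives additive error at most $T^{2/3}/2$ with high probability in the regime $t(e) = O(T^{2/3})$, which is the only regime where the threshold decision is delicate; for $t(e) \gg T^{2/3}$ the multiplicative concentration of $\wh{t}_H$ and the standard deviation bound together force $\wh{t}(e) \gg T^{2/3}$ anyway.

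Finally I would close with the threshold case split. Conditioning on the event $|\wh{t}(e) - \expec{\wh{t}(e)}| \leq T^{2/3}/2$ together with $\expec{\wh{t}(e)} \in [t(e), 2t(e)] \cdot (1 \pm O(\epsilon))$: if the oracle returns light then $\wh{t}(e) < T^{2/3}$, so $t(e) \leq (1+O(\epsilon))\expec{\wh{t}(e)} \leq (1+O(\epsilon))(\wh{t}(e) + T^{2/3}/2) < 4 T^{2/3}$, proving (a); if it returns heavy then $\wh{t}(e) \geq T^{2/3}$, so $2(1+O(\epsilon))\, t(e) \geq \expec{\wh{t}(e)} \geq \wh{t}(e) - T^{2/3}/2 \geq T^{2/3}/2$, giving $t(e) \geq T^{2/3}/4$ and proving (b). The small-$\epsilon$ slack is absorbed into the constants $4$ and $1/4$.

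The main obstacle I anticipate is the variance bound for $\wh{t}_L(e)$: a priori the per-vertex weights $X_z$ could be as large as $t(e)$, and the naive bound $\sum_z X_z^2 \leq t(e) \cdot \max_z X_z$ is too loose without further structure. The uniform bound $X_z = O(T^{1/3})$ requires using the exclusion mechanism and the definition of light diamonds together, so tying these two ingredients into a single inequality is the delicate step. A secondary subtlety is that the regime $t(e) \gg T^{2/3}$ must be treated separately using multiplicative rather than additive concentration, since Chebyshev with an absolute $T^{2/3}/2$ deviation degrades as $t(e)$ grows.
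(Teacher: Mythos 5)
Your overall route is the paper's: split the cycles through $e$ according to whether the wedges containing $e$ are heavy or light, invoke Lemma~\ref{heavy_wedges} for the heavy part, and prove concentration of the $Z_V$-based estimator for the light part using the key structural fact that the exclusion step forces the surviving sibling wedge into a light diamond, so each potential fourth vertex contributes only $O(T^{1/3})$. Your bookkeeping differs in that you carry the factor-of-two overcounting (HH cycles seen through both heavy wedges, LL cycles through both light wedges) explicitly into the threshold comparison, whereas the paper counts each LL cycle once via a designated fourth vertex and argues the HH double-count is a negligible fraction of $t_H(e)$; either accounting is acceptable and your constants essentially work out, modulo routine slack-chasing in part (b).

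The genuine gap is the concentration step for $\wh{t}_L(e)$. Chebyshev with variance $O(T^{2/3}\,t(e)/\log n)$ and deviation $T^{2/3}/2$ gives failure probability $\Theta\bigl(t(e)/(T^{2/3}\log n)\bigr)$, which in the critical regime $t(e)=\Theta(T^{2/3})$ is only $O(1/\log n)$ --- not ``with high probability.'' This is not a cosmetic loss: the oracle's guarantee must hold simultaneously for every edge appearing in a $3$-path of $\H_0$ (the proof of Lemma~\ref{light_lemma} bounds the number of paths sharing an edge by $12T^{2/3}$ by appealing to Lemma~\ref{oracle_lemma} for all such edges), and there can be $\poly(n)$ of them, so a union bound over events of probability $O(1/\log n)$ is vacuous. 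The fix is already contained in your own structural observation: $\wh{t}_L(e)$ is a sum of independent increments $(X_z/p)\mathbf{1}[z\in Z_V]$ with the uniform bound $X_z=O(T^{1/3})$, so a Chernoff/Bernstein bound yields failure probability $\exp(-\Omega(pT^{1/3}))=\exp(-\Omega(\epsilon^{-2}\log n))=1/\poly(n)$. This is exactly what the paper does (its bound $X_b\le 2T^{1/3}$ is your per-vertex weight bound, fed into Chernoff rather than into a variance estimate); with that substitution, and the same Chernoff treatment of the regime $t(e)\gg T^{2/3}$ where you currently gesture at ``standard deviation'' arguments, your proof goes through.
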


\begin{proof}
Let $t_H(e)$ be the number of 4-cycles on $e$, where $e$ is a part of a heavy wedge. Let $t_L(e) = t(e) - t_H(e)$. Let $\wh{t}_H(e)$ and $\wh{t}_L(e)$ be our estimates of those two quantities. 

Note that in the process of approximating $t_H(e)$, we are double-counting 4-cycles with two heavy wedges involving $e$. However, we can show that this double-count is negligible. Let $D(e)$ be the number of heavy diamonds which involve $e$. Since each 4-cycle can belong to at most 2 diamonds, we are double-counting at most $D(e)^2$ cycles. From Lemma~\ref{heavy_wedges} part \textbf{(b)}, it follows that the number of 4-cycles in a heavy diamond is at least ${T^{1/3}/2 \choose 2} \geq T^{2/3}/9$. Therefore, $t_H(e) \geq D(e) T^{2/3}/9$ and $D(e) \leq 2T/(T^{2/3}/9) \leq 5T^{1/3}$. If $T$ is sufficiently large, then $D(e)^2 < (\epsilon/2)t_H(e)$. 

From Lemma~\ref{heavy_wedges} part \textbf{(c)} it follows that
$$\sum_{\substack{\textrm{heavy } w:\\e \in w}} \wh{t}(w) = (1 \pm \epsilon/2) \sum_{\substack{\textrm{heavy } w:\\e \in w}} t(w)$$
Taking double-counting into account,
\begin{equation} \label{oracle_eq1}
\wh{t}_H(e) = (1 \pm \epsilon) t_H(e)
\end{equation}

Recall that $e = (u, v)$ and let $X_b$ be the number of cycles $(u,v,a,b)$ with no heavy wedges if $b \in Z_V$, and $0$ otherwise. Let $X_L = \sum_{b \in V} X_b$ and note that $\expec{X_L} = p t_L(e) = p \expec{\wh{t}_L(e)}$.

If $t_L(e) < T^{2/3}$, then $\expec{\wh{t}_L(e)} < T^{2/3}$, and from the Chernoff bound it follows that
\begin{equation} \label{oracle_eq2}
\prob{|\wh{t}_L(e) - t_L(e)| \geq T^{2/3}/4} 
= \prob{|X_L - p t_L(e)| \geq p T^{2/3}/4}
\leq 2 \exp \left(- \frac{16 p T^{2/3}}{6 T^{1/3}} \right)
\leq 1/\poly(n)
\end{equation}
where the first inequality follows from the fact that $X_b \leq 2T^{1/3}$ for all $b$. Similarly, if $t(e) \geq T^{2/3}$, then 
\begin{equation} \label{oracle_eq3}
\prob{|\wh{t}_L(e) - t_L(e)| \geq t(e)/4} \leq 1/\poly(n)
\end{equation}
We first prove the contrapositive of \textbf{(a)}. Assume $t(e) > 4 T^{2/3}$. Then from Eq.~\ref{oracle_eq1} (taking  $\epsilon = 1/4$) and Eq.~\ref{oracle_eq3},
$$
\wh{t}(e) \geq (t_H(e) - t_H(e)/4) + (t_L(e) - t(e)/4)
\geq t(e) - t(e)/2 
> T^{2/3} 
$$
Similarly, we prove the contrapositive of \textbf{(b)} from Eq.~\ref{oracle_eq1} and~\ref{oracle_eq2}. If $t(e) < T^{2/3}/4$, then 
$$
\wh{t}(e) \leq (t_H(e) + t_H(e)/4) + (t_L(e) + T^{2/3}/4) < T^{2/3}
$$
\end{proof}

\subsubsection{Estimating $T_H$}
In Lemma~\ref{heavy_wedges}, we prove that we can distinguish between large and small diamonds and estimate the number of 4-cycles in a heavy diamond or on a heavy wedge.
\begin{lemma}\label{heavy_wedges}
Let $w(d)$ be the number of wedges in diamond $d$, and let $q(d)$ be the number of those wedges with center in $Q_V$. Recall that $t(d)$ is the number of 4-cycles in diamond $d$, and $t(w)$ is the number of 4-cycles involving wedge $w$. Then with high probability,
\begin{description}
\item[a.] If diamond $d$ is heavy ($q(d) < pT^{1/3}$), then $w(d) \leq 2T^{1/3}$
\item[b.] If diamond $d$ is light ($q(d) \geq pT^{1/3}$), then $w(d) \geq T^{1/3}/2$
\item[c.] If wedge $w$ is heavy, then $\wh{t}(w) = q(d)/p - 1 = (1 \pm \epsilon/2) t(w)$
\item[d.] If diamond $d$ is heavy, then $\wh{t}(d) = {q(d)/p \choose 2} = (1 \pm \epsilon/4) t(d)$
\end{description}
\end{lemma}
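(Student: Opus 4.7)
The plan is to observe that $q(d)$ is a sum of $w(d)$ independent Bernoulli$(p)$ indicators (one for each candidate center in $\Gamma(u) \cap \Gamma(v)$), so $\expec{q(d)} = p \cdot w(d)$, and then derive all four statements from standard multiplicative Chernoff bounds. Everything will be proved for a single diamond, and then a union bound over the at most $\binom{n}{2}$ vertex pairs will upgrade the result to the ``with high probability'' claim, provided the constant $c$ in $p = c \log n / (\epsilon^2 T^{1/3})$ is chosen sufficiently large.

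For parts \textbf{(a)} and \textbf{(b)} I would argue the contrapositives. For \textbf{(a)}: if $w(d) > 2 T^{1/3}$, then $\expec{q(d)} > 2 p T^{1/3}$, and the lower-tail Chernoff with relative deviation $1/2$ gives $q(d) \geq p T^{1/3}$ with probability at least $1 - \exp(-\Omega(p T^{1/3})) = 1 - n^{-\Omega(c / \epsilon^{2})}$. For \textbf{(b)}: if $w(d) < T^{1/3}/2$, then $\expec{q(d)} < p T^{1/3}/2$, and the upper-tail Chernoff at multiplicative deviation $\delta \geq 1$ (the Chernoff rate $(1+\delta)\ln(1+\delta) - \delta$ is a positive constant at $\delta = 1$) similarly rules out $q(d) \geq p T^{1/3}$ except with probability $n^{-\Omega(c/\epsilon^2)}$.

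For parts \textbf{(c)} and \textbf{(d)} the key observation is that whenever a diamond is heavy, part \textbf{(b)} already supplies $w(d) \geq T^{1/3}/2$, so $\expec{q(d)} \geq p T^{1/3}/2 = \Omega(\log n / \epsilon^{2})$. This puts us in the regime in which Chernoff with relative deviation $\epsilon/C$ (for a suitable small constant $C$) yields $q(d)/p = (1 \pm \epsilon/C) w(d)$ with high probability. Plugging this into $\wh{t}(w) = q(d)/p - 1$ gives $\wh{t}(w) = (1 \pm \epsilon/2)(w(d) - 1) = (1 \pm \epsilon/2) t(w)$, since $w(d) \geq T^{1/3}/2 \gg 1$ so the additive $-1$ is absorbed into the relative error. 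For the diamond estimate, $\wh{t}(d) = \binom{q(d)/p}{2}$ at most squares the relative error, so choosing $C$ a constant factor larger gives $(1 \pm \epsilon/4) t(d)$.

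The main obstacle is bookkeeping: parts \textbf{(c)} and \textbf{(d)} require different accuracies ($\epsilon/2$ versus $\epsilon/4$), so the Chernoff deviation must be calibrated to the tighter of the two, and one has to verify that the $-1$ in $\wh{t}(w)$ and the quadratic blow-up inside $\binom{\cdot}{2}$ each fit inside the stated tolerances. The constant $c$ must then be chosen uniformly large enough that the union bound over $O(n^{2})$ pairs together with the Chernoff failure probabilities in all four parts simultaneously succeed.
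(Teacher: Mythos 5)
Your proposal is correct and follows essentially the same route as the paper: model $q(d)$ as $\bin(w(d),p)$, prove \textbf{(a)} and \textbf{(b)} by contrapositive Chernoff bounds at constant relative deviation, and for \textbf{(c)} and \textbf{(d)} use that a diamond with $q(d)\geq pT^{1/3}$ has $\expec{q(d)}=\Omega(\log n/\epsilon^2)$, so a Chernoff bound at relative deviation $\epsilon/C$ (the paper takes $\epsilon/20$) controls both the additive $-1$ in $\wh{t}(w)$ and the quadratic blow-up in $\binom{q(d)/p}{2}$. The only difference is that you make the union bound over $O(n^2)$ vertex pairs explicit, which the paper leaves implicit in its $1/\poly(n)$ failure probabilities.
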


\begin{proof}
Observe that $q(d) \sim \bin(w(d), p)$. By an application of the Chernoff bound, if $w(d) \geq 2T^{1/3}$, then 
$$\prob{q(d) < pT^{1/3}} \leq \exp(-2pT^{1/3}/3) \leq 1/\poly(n)$$
proving \textbf{(a)}. Statement \textbf{(b)} is proved similarly.

Note that the number of 4-cycles in a diamond grows as the square of the number of wedges. Therefore, to get a $(1 + \epsilon/4)$-approximation to $t(d)$, we need to estimate $w(d)$ to a higher accuracy. If $q(d) \geq pT^{1/3}$, from Chernoff it follows that
$$\prob{|q(d) - w(d) p| \geq (\epsilon/20) w(d) p} \leq 2 \exp(- \epsilon^2 w(d) p /1200) \leq 1/\poly(n)$$
Recall that if a diamond consists of $k$ wedges, then the number of 4-cycles on each of those wedges is $k-1$. Therefore, statement \textbf{(c)} follows since $\epsilon/20 < \epsilon/2$. Statement \textbf{(d)} follows since ${(1+\epsilon/20)w(d) \choose 2} \leq (1 + \epsilon/4){w(d) \choose 2}$ and ${(1-\epsilon/20)w(d) \choose 2} \geq (1 - \epsilon/4){w(d) \choose 2}$.
\end{proof}

\begin{lemma} \label{heavy_lemma}
With high probability, $\wh{T}_H = T_H \pm \epsilon T/3$.
\end{lemma}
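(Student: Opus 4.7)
The plan is to split the error $\wh{T}_H - T_H$ into two sources: the per-diamond estimation error from Lemma~\ref{heavy_wedges}(d) and the double-counting of 4-cycles that lie in two heavy diamonds simultaneously. Write $H$ for the (random) set of heavy diamonds identified in pass~1, and let $S := \sum_{d \in H} t(d)$.

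First, I would apply Lemma~\ref{heavy_wedges}(d) and union-bound over the $O(n^2)$ possible endpoint pairs to conclude that, with high probability, $\wh{t}(d) = (1 \pm \epsilon/4) t(d)$ simultaneously for every $d \in H$; summing gives $\wh{T}_H = (1 \pm \epsilon/4) S$. The key structural fact I would use next is that every 4-cycle lies in exactly two diamonds---namely those whose endpoint pairs are its two pairs of opposite vertices---so $S = T_H + T_{H,2}$, where $T_{H,2}$ is the number of 4-cycles both of whose diamonds are heavy. Therefore
$$|\wh{T}_H - T_H| \leq (\epsilon/4)\, S + T_{H,2}.$$

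The main obstacle is controlling $T_{H,2}$. I plan to attack this in two steps. First, $|H|$ is small: by Lemma~\ref{heavy_wedges}(b) every heavy diamond contains at least $\binom{T^{1/3}/2}{2} \geq T^{2/3}/9$ 4-cycles, and combining with the identity $\sum_{\text{all }d} t(d) = 2T$ (each 4-cycle lies in exactly two diamonds) yields $|H| \leq 18 T^{1/3}$. Second, any two distinct diamonds with endpoint pairs $\{a,c\}$ and $\{b,d\}$ share at most one 4-cycle: if $\{a,c\} \cap \{b,d\} \neq \emptyset$ the diamonds cannot both be diamonds of a common 4-cycle, and otherwise the common 4-cycle, if it exists, is the unique cycle on $\{a,b,c,d\}$. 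Hence $T_{H,2} \leq \binom{|H|}{2} \leq 162\, T^{2/3}$, which is $o(\epsilon T)$ whenever $T = \Omega(\epsilon^{-3})$ (the regime where the algorithm is meaningful).

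Plugging $S \leq 2T$ and the $T_{H,2}$ bound into the displayed inequality---and, if required by the constants, tightening the Chernoff slack inside Lemma~\ref{heavy_wedges}(d) (which is free, e.g.\ replacing the constant $\epsilon/20$ by a smaller one)---gives $|\wh{T}_H - T_H| \leq \epsilon T/3$ with high probability. The structural bound on $T_{H,2}$, rather than the estimation error per diamond, is the genuinely non-routine step.
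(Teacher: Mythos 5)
Your proof is correct and follows essentially the same route as the paper: apply Lemma~\ref{heavy_wedges}(d) to each heavy diamond, then bound the double-counting of 4-cycles lying in two heavy diamonds by combining the lower bound $t(d) \geq T^{2/3}/9$ per heavy diamond with the fact that two diamonds share at most one 4-cycle, yielding an $O(T^{2/3}) = o(\epsilon T)$ overcount. Your version is in fact slightly more careful than the paper's (the explicit identity $S = T_H + T_{H,2}$, the justification that distinct diamonds share at most one cycle, and the bound $|H| \leq 18T^{1/3}$ versus the paper's stated $5T^{1/3}$), but the underlying argument is the same.
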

\begin{proof}
First, note that our algorithm double-counts 4-cycles which are involved in two heavy diamonds. As was mentioned before, the number of 4-cycles in a heavy diamond is at least $T^{2/3}/9$, and thus the number of heavy diamonds is at most $5T^{1/3}$. Since two diamonds can have at most one cycle in common, we are double-counting at most $25 T^{2/3} \leq (\epsilon/12)T$ cycles. The rest of the proof follows from Lemma~\ref{heavy_wedges} part \textbf{(d)}.
\end{proof}

\subsubsection{Estimating $T_L$}
\begin{lemma} \label{light_lemma}
With constant probability, $\wh{T}_L = T_L \pm \epsilon T/2$.
\end{lemma}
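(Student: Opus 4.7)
The plan is to condition on the high-probability event of Lemma~\ref{oracle_lemma} that every relevant edge is correctly classified by the oracle, verify that $\wh{T}_L$ is unbiased, and then apply Chebyshev's inequality to a variance bound obtained by grouping pairs of sampled $3$-paths by how many edges they share.

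Under the conditioning, each $4$-cycle $C \in T_L$ contributes exactly $1$ in expectation to $\wh{T}_L$: if $C$ has no heavy edges, each of its four edges can play the role of the completing edge $e$, with $(e, C \setminus \{e\})$ stored in $A_0$ exactly when the other three edges land in $S_E$, contributing $4p^3/(4p^3) = 1$; if $C$ has exactly one heavy edge $h$, only $(h, C \setminus \{h\})$ is eligible for $A_1$ and contributes $p^3/p^3 = 1$. Cycles with a heavy wedge are discarded in Pass~2 (note that checking the two wedges of $\tau$ suffices, since the two diamonds of the completed $4$-cycle are indexed by exactly the endpoint pairs of those two wedges, and all wedges in a given diamond share a heavy/light status), and cycles with at least two heavy edges contribute to neither $A_0$ nor $A_1$. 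Thus $\expec{\wh{T}_L} = T_L$.

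For the variance, write $\wh{T}_L = \sum_\tau w_\tau Z_\tau$, where $\tau$ ranges over $3$-paths of light edges that complete to a cycle in $T_L$, $Z_\tau = \mathbf{1}[\tau \subseteq S_E]$, and $w_\tau \in \{1/(4p^3), 1/p^3\} \leq 1/p^3$. Since $\covar{Z_{\tau_1}, Z_{\tau_2}} = p^{6 - |\tau_1 \cap \tau_2|} - p^6$, only $k = |\tau_1 \cap \tau_2| \in \{1, 2, 3\}$ contribute to $\var{\wh{T}_L}$. The diagonal $k = 3$ term is at most $O(T/p^3) = O(\epsilon^6 T^2/\log^3 n)$ using $|\{\tau\}| \leq 4T$. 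For $k = 2$ with the shared edges forming a wedge $w$, the wedge is light, Lemma~\ref{heavy_wedges} bounds $t(w) \leq 2T^{1/3}$, and Cauchy--Schwarz against $\sum_w t(w) \leq 2T$ gives $O(T^{4/3})$ pairs; the non-adjacent sub-case is smaller because two cycles sharing a $2$-matching necessarily live on a common $4$-vertex set. The $k=2$ contribution is thus $O(T^{4/3}/p^2) = O(\epsilon^4 T^2 / \log^2 n)$. For $k = 1$, the shared light edge $e$ has $t(e) \leq 4T^{2/3}$ (Lemma~\ref{oracle_lemma}), at most $3 t(e)$ qualifying $3$-paths pass through $e$, and combining with $\sum_e t(e) \leq 4T$ yields $O(T^{5/3})$ pairs and a contribution $O(T^{5/3}/p) = O(\epsilon^2 T^2 / \log n)$.

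Summing, $\var{\wh{T}_L} = O(\epsilon^2 T^2 / \log n)$, so Chebyshev gives $\prob{|\wh{T}_L - T_L| \geq \epsilon T/2} = O(1/\log n)$, well within the constant needed (for $n$ and the constant $c$ in the sampling rate sufficiently large). I expect the $k=1$ case to be the delicate step: a naive sum of $t(e)^2$ over all light edges would carry a spurious factor of $m$, so the argument must combine the pointwise oracle bound $t(e) \leq 4T^{2/3}$ with the aggregate bound $\sum_e t(e) \leq 4T$ to obtain a tight estimate.
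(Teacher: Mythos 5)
Your proof is correct and follows essentially the same route as the paper's: unbiasedness of $A_0/(4p^3)+A_1/p^3$, then Chebyshev with a second-moment bound that classifies pairs of sampled $3$-paths by the number of shared edges and controls each class via the light-edge bound $t(e)\le 4T^{2/3}$ and the light-wedge bound $t(w)\le 2T^{1/3}$. The only differences are cosmetic: the paper bounds $\wh{T}_0$ and $\wh{T}_1$ separately rather than as one sum, and it leaves implicit both the non-adjacent two-shared-edge sub-case and the verification that $\expec{\wh{T}_L}=T_L$, which you spell out.
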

\begin{proof}
Let $T_i$ be the number of 4-cycles in $T_L$ with $i$ heavy edges. Let $\wh{T}_0 = A_0/(4p^3)$ and $\wh{T}_1 = A_1/p^3$. Note that $\expec{\wh{T}_0} = T_0$ and $\expec{\wh{T}_1} = T_1$.

We now show that with constant probability, $\wh{T}_0 = T_0 \pm \epsilon T/4$ and $\wh{T}_1 = T_1 \pm \epsilon T/4$.

By an application of the Chebyshev bound, 
$$\prob{|\wh{T}_0 - T_0| \leq \epsilon T/4} \leq 1/16$$
as long as $\var{\wh{T}_0} \leq \epsilon^2 T^2/256$. We now give a bound on the variance of $\wh{T}_0$. Let $\H_0$ be the set of 3-paths which are involved in 4-cycles in $T_0$.  Let $X_q$ be 1 if all 3 edges of path $q \in \H_0$ were sampled and 0 otherwise. Then 
\begin{align}
\var{\wh{T_0}} &= \var{\frac{1}{4p^3} \sum_{q \in \H_0} X_q} \nonumber\\
&= \frac{1}{16 p^6} \left( \sum_{q \in \H_0}\var{X_q} + \sum_{\substack{q, t \in \H_0 \colon \\ q \neq t, \\ q \cap t \neq \emptyset}} \covar{X_q, X_t} \right) \nonumber\\
& \leq \frac{1}{16 p^6} \left( \sum_{q \in \H_0}\expec{X^2_q} + \sum_{\substack{q, t \in \H_0 \colon \\ q \neq t, \\ q \cap t \neq \emptyset}} \expec{X_q X_t} \right) \nonumber\\
& \leq \frac{1}{16 p^6} \left( \sum_{q \in \H_0}p^3 + \sum_{q \in \H_0} \sum_{\substack{t \in \H_0 \colon \\ q \neq t, \\ |q \cap t| = 1}} p^5 + \sum_{q \in \H_0} \sum_{\substack{t \in \H_0 \colon \\ q \neq t, \\ |q \cap t| = 2}} p^4 \right) \nonumber\\
& \leq \frac{1}{16 p^6} \left( |\H_0| p^3 + \sum_{q \in \H_0} cT^{2/3} p^5 + \sum_{q \in \H_0} cT^{1/3} p^4 \right) \label{eq1}\\
& \leq \frac{1}{16 p^6} \left(|\H_0| p^3 + c|\H_0|T^{2/3} p^5 + c|\H_0|T^{1/3} p^4 \right) \nonumber\\
&\leq T/4p^3 + cT^{5/3}/4p + cT^{4/3}/4p^2 \nonumber\\
&\leq \epsilon^2 T^2/256 \label{eq2}
\end{align}

Equation~\ref{eq1} follows from the fact that any path $q \in \H_0$ intersects at most $12 T^{2/3}$ other paths in $\H_0$ at one edge and at most $4 T^{1/3}$ paths at two edges (from Lemmas~\ref{oracle_lemma} and \ref{heavy_wedges}). Equation~\ref{eq2} follows from our definition of $p$.

Proving $\prob{|\wh{T}_1 - T_1| \leq \epsilon T/4} \leq 1/16$ follows along the same lines.
\end{proof}

\subsubsection{Estimating $T$}
We refer to one of the lemmas in~\cite{MV20}, which bounds the number of 4-cycles with at most one edge which is involved in a lot of cycles.
\begin{lemma}[McGregor and Vorotnikova~\cite{MV20}]\label{old_lemma}
We call an edge $e$ ``bad'' if it is contained in at least $\eta \sqrt{T}$ 4-cycles, and ``good'' otherwise.  There are at least $(1-82/\eta)T$ cycles containing no more than one bad edge.
\end{lemma}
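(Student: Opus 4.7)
The lemma amounts to showing $T_{\geq 2} \leq (82/\eta)T$, where $T_{\geq 2}$ counts 4-cycles containing at least two bad edges; the complementary count is then at least $(1-82/\eta)T$. I would first establish that bad edges are scarce: since $\sum_e t(e) = 4T$ (each 4-cycle contributes to $t(e)$ for each of its four edges) and every bad edge satisfies $t(e) \geq \eta\sqrt{T}$, we immediately get $|B| \leq 4\sqrt{T}/\eta$.

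Next, I would bound $T_{\geq 2}$ by splitting based on the geometry of a bad pair of edges inside the cycle: either the two bad edges are \emph{adjacent} (share a vertex and form a wedge) or they are \emph{opposite} (vertex-disjoint diagonals of the 4-cycle). Every cycle in $T_{\geq 2}$ has at least one such pair, so $T_{\geq 2} \leq R_{\text{opp}} + R_{\text{adj}}$. For the opposite case, two vertex-disjoint bad edges $\{e_1,e_2\}$ can be the opposite edges of at most $2$ distinct 4-cycles (one for each way of matching up their endpoints with edges of the graph), giving $R_{\text{opp}} \leq 2\binom{|B|}{2} \leq 16T/\eta^2$, which is comfortably $O(T/\eta)$ in the interesting regime $\eta \geq 1$.

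For the adjacent case, every such cycle contains a ``bad-bad wedge'' (a wedge both of whose edges lie in $B$), so $R_{\text{adj}} \leq \sum_{\text{bb wedges } w} t(w)$, where $t(w) = |\Gamma(u)\cap\Gamma(v)|-1$ for a wedge with endpoints $u,v$. To bring this below $O(T/\eta)$, I would combine the inequality $t(w) \leq \min(t(e_1),t(e_2))$ with either a Cauchy-Schwarz argument over the set of bad-bad wedges or an averaging argument across diamonds, exploiting the fact that $|B|$ is very small ($\leq 4\sqrt{T}/\eta$) so few wedges can be bad-bad. Summing the opposite and adjacent contributions, one should recover the claimed bound $T_{\geq 2} \leq 82T/\eta$, with the constant $82$ absorbing the various slack factors from these two counts.

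The main obstacle is precisely the adjacent case: a single bad-bad wedge can participate in many 4-cycles if it sits inside a large diamond, so simply counting bad-bad wedges is too crude. Extracting the $1/\eta$ savings in $R_{\text{adj}}$ requires a careful charging argument that ties the heaviness of bad edges to the sizes of the diamonds they live in, balancing ``few bad edges but potentially wide diamonds'' against ``wide diamonds contain many 4-cycles and thus force many edges to be good''; this is the technical heart of the lemma.
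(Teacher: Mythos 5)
This lemma is quoted from~\cite{MV20} and the paper under review gives no proof of it, so there is no in-paper argument to compare yours against; I can only judge the sketch on its own terms. The easy pieces are right: $\sum_e t(e) = 4T$ does give $|B| \leq 4\sqrt{T}/\eta$ for the set $B$ of bad edges, and the cycles whose two bad edges are opposite number at most $2\binom{|B|}{2} \leq 16T/\eta^2$.

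The adjacent case, however, is not a proof but a restatement of the difficulty, and the tools you name can be seen to be insufficient. A diamond may contain up to $\Theta(\sqrt{T})$ wedges (since $\binom{k}{2} \leq T$), so a single bad--bad wedge can satisfy $t(w) = \Theta(\sqrt{T})$, and the inequality $t(w) \leq \min\bigl(t(e_1), t(e_2)\bigr)$ buys nothing because bad edges have $t(e) \geq \eta\sqrt{T}$ by definition. Multiplying your bound of $O(T/\eta^2)$ bad--bad wedges by $t(w) = O(\sqrt{T})$ yields only $O(T^{3/2}/\eta^2)$, which exceeds the target $O(T/\eta)$ whenever $\eta = o(\sqrt{T})$ --- and this paper invokes the lemma with $\eta = T^{1/6}/4$, squarely in that regime. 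A Cauchy--Schwarz pass over $\sum_{u,v}\beta_{uv}\,k_{uv}$ (with $\beta_{uv}$ the number of bad--bad wedges and $k_{uv}$ the number of wedges on endpoints $u,v$), using $\sum_{u,v} k_{uv}^2 = O(T)$ over pairs with $k_{uv}\ge 2$ and $\sum_{u,v}\beta_{uv} = O(T/\eta^2)$, fares no better, giving $O(T^{5/4}/\eta^{3/2})$. So the ``careful charging argument'' you defer to is precisely the content of the lemma: one must exploit that a wide diamond forces many of its own edges into many 4-cycles and relate that to the cycles the diamond contributes, rather than treating bad--bad wedges as independent objects each paying their worst-case $t(w)$. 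Without that step the bound $82T/\eta$ --- indeed any bound of the form $O(T/\eta)$ --- is not established, so the proposal has a genuine gap at its central claim.
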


Applying this lemma with $\eta  = T^{1/6}/4$, we get that the number of cycles with at most one bad edge is at least $(1-328/T^{1/6})T \leq (1- \epsilon/6)T$. 
We can now prove the main lemma.

\begin{lemma}
With constant probability, $\wt{T} =  (1 \pm \epsilon)T$. 
\end{lemma}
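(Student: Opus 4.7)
The plan is to decompose the true count $T$ in a way that matches the two quantities the algorithm estimates, and then bound each piece separately. Write
\[
T \;=\; T_H \;+\; T_L \;+\; T_R,
\]
where $T_R$ is the number of 4-cycles that have no heavy wedge but contain at least two heavy edges (so these cycles are counted by neither $T_H$ nor $T_L$). Since the algorithm returns $\wh{T}_H+\wh{T}_L$, the triangle inequality gives
\[
\bigl|(\wh{T}_H+\wh{T}_L) - T\bigr| \;\leq\; \bigl|\wh{T}_H - T_H\bigr| \;+\; \bigl|\wh{T}_L - T_L\bigr| \;+\; T_R.
\]
Lemma~\ref{heavy_lemma} bounds the first term by $\epsilon T/3$ (with high probability) and Lemma~\ref{light_lemma} bounds the second by $\epsilon T/2$ (with constant probability). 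The remaining task is to show $T_R \leq \epsilon T/6$.

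The key observation for controlling $T_R$ is that any edge classified heavy by the oracle is automatically ``bad'' in the sense of Lemma~\ref{old_lemma} for the parameter choice $\eta := T^{1/6}/4$. Indeed, Lemma~\ref{oracle_lemma}(b) says that $\oracle$ returning \heavy\ implies $t(e)\geq T^{2/3}/4 = \eta\sqrt{T}$. Therefore every 4-cycle contributing to $T_R$ contains at least two bad edges, so Lemma~\ref{old_lemma} (applied with this $\eta$) yields
\[
T_R \;\leq\; \frac{82}{\eta}\,T \;=\; \frac{328\,T}{T^{1/6}} \;\leq\; \frac{\epsilon T}{6},
\]
exactly the bound already recorded in the paragraph preceding the lemma statement. (For $T$ so small that $T^{1/6} < 1968/\epsilon$, the promised space bound $O(m\log n / \epsilon^2 T^{1/3})$ permits storing the whole stream and computing $T$ exactly, so we may assume $T$ is at least this large.)

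Putting the three bounds together, $\bigl|(\wh{T}_H+\wh{T}_L)-T\bigr| \leq \epsilon T/3 + \epsilon T/2 + \epsilon T/6 = \epsilon T$. The success probability is obtained by a union bound: the $1/\poly(n)$-failure events of Lemmas~\ref{oracle_lemma}, \ref{heavy_wedges}, and \ref{heavy_lemma} together contribute negligibly, leaving the constant-probability event of Lemma~\ref{light_lemma} to determine the overall constant success probability. The only nonroutine step is the identification of the oracle's heavy/light threshold $T^{2/3}/4$ with the threshold $\eta\sqrt{T}$ used in Lemma~\ref{old_lemma}; the rest is bookkeeping via the triangle inequality.
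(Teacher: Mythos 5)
Your proof is correct and follows essentially the same route as the paper's: both decompose the 4-cycles into those counted by $T_H$, those counted by $T_L$, and a remainder, and both dispose of the remainder by combining Lemma~\ref{oracle_lemma}(b) (oracle-heavy implies bad with $\eta = T^{1/6}/4$) with Lemma~\ref{old_lemma}, before invoking Lemmas~\ref{heavy_lemma} and~\ref{light_lemma}. The paper phrases this as the sandwich $(1-\epsilon/6)T \leq T_L + T_H \leq T$ rather than explicitly naming $T_R$, but the content is identical.
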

\begin{proof}
Let $T'_H$ be the number of cycles with at least one heavy wedge and at most one heavy edge. Note that $T'_H \leq T_H$. Since good edges (with $t(e) \leq T^{2/3}/4$) are classified as light w.h.p., 
$$(1- \epsilon/6)T \leq T_L + T'_H \leq T_L + T_H \leq T$$
where the first inequality follows from Lemma~\ref{old_lemma}. The rest of the proof follows from Lemmas~\ref{heavy_lemma} and~\ref{light_lemma}.
\end{proof}

\subsection{Space analysis}
Sets $S_E$, $Q_E$, and $Z_E$ all have the same expected size $mp = O(\frac{m \log n}{\epsilon^2 T^{1/3}})$. The expected number of cycles stored in pass 2 is $4T/p^3 = \wt{O}(1)$. Finally, the extra space used by each instance of $\oracle(Z_V, Z_E, e)$ is in expectation $\wt{O}(1)$, since it keeps track of a constant number of counters and $\wt{O}(1)$ ``excluded'' edges, corresponding to heavy wedges involving $e$ among the input of the instance. Therefore, the total space used by the algorithm is $O(\frac{m \log n}{\epsilon^2 T^{1/3}})$.

\bibliographystyle{plain}
\bibliography{references}

\end{document}